\newtheorem{observation}{Observation}
\begin{document}

\title{Minimum Bisection is NP-hard \\
on Unit Disk Graphs}
\author{Josep D{\'i}az\inst{1} \and George B. Mertzios\inst{2}\thanks{%
Partially supported by the EPSRC Grant~EP/K022660/1.}}
\institute{Departament de Llenguatges i Sistemes Inform{\'a}tics, \\
Universitat Polit{\'e}cnica de Catalunya, Spain.
\and School of Engineering and Computing Sciences, Durham University, UK. \\ 
\texttt{diaz@lsi.upc.edu, george.mertzios@durham.ac.uk}}
\maketitle

\begin{abstract}
In this paper we prove that the \textsc{Min-Bisection} problem is NP-hard on 
\emph{unit disk graphs}, thus solving a longstanding open question.\newline

\noindent \textbf{Keywords:} Minimum bisection problem, unit disk graphs,
planar graphs, NP-hardness.
\end{abstract}

\section{Introduction}

\label{intro-sec}

The problem of appropriately partitioning the vertices of a given graph into
subsets, such that certain conditions are fulfilled, is a fundamental
algorithmic problem. Apart from their evident theoretical interest, graph
partitioning problems have great practical relevance in a wide spectrum of
applications, such as in computer vision, image processing, and VLSI layout
design, among others, as they appear in many divide-and-conquer algorithms
(for an overview see~\cite{GraphPartitioning-Book}). In particular, the
problem of partitioning a graph into equal sized components, while
minimizing the number of edges among the components turns out to be very
important in parallel computing. For instance, to parallelize applications
we usually need to evenly distribute the computational load to processors,
while minimizing the communication between processors.

Given a simple graph $G=(V,E)$ and ${k\geq 2}$, a \emph{balanced }$k$\emph{%
-partition} of $G=(V,E)$ is a partition of $V$ into $k$ vertex sets $%
V_{1},V_{2},\ldots ,V_{k}$ such that $|V_{i}|\leq \left\lceil \frac{|V|}{k}%
\right\rceil$ for every $i=1,2,\ldots ,k$. The \emph{cut size} (or simply,
the \emph{size}) of a balanced $k$-partition is the number of edges of $G$
with one endpoint in a set $V_{i}$ and the other endpoint in a set $V_{j}$,
where $i\neq j$. In particular, for $k=2$, a balanced $2$-partition of $G$
is also termed a \emph{bisection} of $G$. The \emph{minimum bisection}
problem (or simply, \textsc{Min-Bisection}) is the problem, given a graph $G$%
, to compute a bisection of~$G$ with the minimum possible size, also known
as the \emph{bisection width} of $G$.

Due to the practical importance of \textsc{Min-Bisection}, several
heuristics and exact algorithms have been developed, which are quite
efficient in practice~\cite{GraphPartitioning-Book}, from the first ones in
the 70's~\cite{KL70} up to the very efficient one described in~\cite%
{DellingGRW12}. However, from the theoretical viewpoint, \textsc{%
Min-Bisection} has been one of the most intriguing problems in algorithmic
graph theory so far. This problem is well known to be NP-hard for general
graphs~\cite{GareyJohnson}, while it remains NP-hard when restricted to the
class of everywhere dense graphs~\cite{MacGregor78} (i.e.~graphs with
minimum degree~$\Omega (n)$), to the class of bounded maximum degree graphs~%
\cite{MacGregor78}, or to the class of $d$-regular graphs~\cite{Bui87}. On
the positive side, very recently it has been proved that \textsc{%
Min-Bisection} is fixed parameter tractable~\cite{CyganLPPS14}, while the
currently best known approximation ratio is $O(\log n)$~\cite{Racke08}.
Furthermore, it is known that \textsc{Min-Bisection} can be solved in
polynomial time on trees and hypercubes~\cite{MacGregor78,DiazPS02}, on
graphs with bounded treewidth~\cite{JansenKLS05}, as well as on grid graphs
with a constant number of holes~\cite{PapadimitriouSideri96,FeldmannW11}.

In spite of this, the complexity status of \textsc{Min-Bisection} on planar
graphs, on grid graphs with an arbitrary number of holes, and on unit disk
graphs have remained longstanding open problems so far~\cite%
{FeldmannW11,DiazPPS01,Karpinski02,Kahruman09}. The first two of these
problems are equivalent, as there exists a polynomial time reduction from
planar graphs to grid graphs with holes~\cite{PapadimitriouSideri96}.
Furthermore, there exists a polynomial time reduction from planar graphs
with maximum degree $4$ to unit disk graphs~\cite{DiazPPS01}. Therefore,
since grid graphs with holes are planar graphs of maximum degree $4$, there
exists a polynomial reduction of \textsc{Min-Bisection} from planar graphs
to unit disk graphs. Another motivation for studying \textsc{Min-Bisection}
on unit disk graphs comes from the area of wireless communication networks~%
\cite{AkyildizSSC02,BradonjicEFSS10}, as the bisection width determines the
communication bandwidth of the network~\cite{Hromkovic05}.

\vspace{0.2cm}

\noindent \textbf{Our contribution.} In this paper we resolve the complexity
of \textsc{Min-Bisection} on unit disk graphs. In particular, we prove that
this problem is NP-hard by providing a polynomial reduction from a variant
of the maximum satisfiability problem, namely from the \emph{monotone
Max-XOR($3$)} problem. This optimization problem (which is also known as the
monotone Max-$2$-XOR($3$) problem) essentially encodes the \emph{Max-Cut}
problem on $3$-regular graphs. Consider a monotone XOR-boolean formula $\phi 
$ with variables $x_{1},x_{2},\ldots ,x_{n}$, i.e.~a boolean formula that is
the conjunction of XOR-clauses of the form $(x_{i}\oplus x_{k})$, where no
variable is negated. If, in addition, every variable $x_{i}$ appears in
exactly $k$ XOR-clauses in $\phi $, then $\phi $ is called a \emph{monotone
XOR(}$k$\emph{)} formula. The \emph{monotone Max-XOR(}$k$\emph{)} problem
is, given a monotone XOR($k$) formula $\phi $, to compute a truth assignment
of the variables $x_{1},x_{2},\ldots ,x_{n}$ that XOR-satisfies the largest
possible number of clauses of $\phi $. Recall here that the clause $%
(x_{i}\oplus x_{k})$ is XOR-satisfied by a truth assignment $\tau $ if and
only if $x_{i}\neq x_{k}$ in $\tau $. Given a monotone XOR($k$) formula $%
\phi $, we construct a unit disk graph $H_{\phi }$ such that the truth
assignments that XOR-satisfy the maximum number of clauses in $\phi $
correspond bijectively to the minimum bisections in $H_{\phi }$, thus
proving that \textsc{Min-Bisection} is NP-hard on unit disk graphs.

\vspace{0.2cm}

\noindent \textbf{Organization of the paper.} Necessary definitions and
notation are given in Section~\ref{preliminaries-sec}. In Section~\ref%
{Gn-sec}, given a monotone XOR($3$)-formula $\phi $ with $n$ variables, we
construct an auxiliary unit disk graph $G_{n}$, which depends only on the
size $n$ of $\phi $ (and not on $\phi $ itself). In Section~\ref{H-phi-sec}
we present our reduction from the monotone Max-XOR($3$) problem to \textsc{%
Min-Bisection} on unit disk graphs, by modifying the graph $G_{n}$ to a unit
disk graph $H_{\phi }$ which also depends on the formula $\phi $ itself.
Finally we discuss the presented results and remaining open problems in
Section~\ref{conclusions}.

\section{Preliminaries and Notation\label{preliminaries-sec}}

We consider in this article simple undirected graphs with no loops or
multiple edges. In an undirected graph $G=(V,E)$, the edge between vertices $%
u$ and $v$ is denoted by~$uv$, and in this case $u$ and $v$ are said to be 
\emph{adjacent} in $G$. For every vertex $u\in V$ \ the \emph{neighborhood}
of $u$ is the set $N(u)=\{v\in V\ |\ uv\in E\}$ of its adjacent vertices and
its \emph{closed neighborhood} is $N[u]=N(u)\cup \{u\}$. The subgraph of $G$
that is \emph{induced} by the vertex subset $S\subseteq V$ is denoted $G[S]$%
. Furthermore a vertex subset $S\subseteq V$ induces a \emph{clique} in $G$
if $uv\in E$ for every pair $u,v\in S$.

A graph $G=(V,E)$ with $n$ vertices is the \emph{intersection graph} of a
family ${F=\{S_{1},\ldots ,S_{n}\}}$ of subsets of a set $S$ if there exists
a bijection $\mu :V\rightarrow F$ such that for any two distinct vertices~${%
u,v\in V}$, $uv\in E$ if and only if $\mu (u)\cap \mu (v)\neq \emptyset $.
Then, $F$ is called an \emph{intersection model} of $G$. A graph $G$ is a 
\emph{disk} graph if $G$ is the intersection graph of a set of disks
(i.e.~circles together with their internal area) in the plane. A disk graph $%
G$ is a \emph{unit disk} graph if there exists a disk intersection model for 
$G$ where all disks have equal radius (without loss of generality, all their
radii are equal to $1$). Given a disk (resp.~unit disk) graph $G$, an
intersection model of $G$ with disks (resp.~unit disks) in the plane is
called a \emph{disk} (resp.~\emph{unit disk}) \emph{representation} of $G$.
Alternatively, unit disk graphs can be defined as the graphs that can be
represented by a set of points on the plane (where every point corresponds
to a vertex) such that two vertices intersect if and only if the
corresponding points lie at a distance at most some fixed constant $c$ (for
example $c=1$). Although these two definitions of unit disk graphs are
equivalent, in this paper we use the representation with the unit disks
instead of the representation with the points.

Note that any unit disk representation $R$ of a unit disk graph $G=(V,E)$
can be completely described by specifying the centers $c_{v}$ of the unit
disks $D_{v}$, where $v\in V$, while for any disk representation we also
need to specify the radius $r_{v}$ of every disk $D_{v}$, $v\in V$. Given a
graph $G$, it is NP-hard to decide whether $G$ is a disk (resp.~unit disk)
graph~\cite{Kratochvil96,Breu98}. Given a unit disk representation $R$ of a
unit disk graph $G$, in the remainder of the paper we may not distinguish
for simplicity between a vertex of $G$ and the corresponding unit disk in $R$%
, whenever it is clear from the context. It is well known that the Max-XOR
problem is NP-hard. Furthermore, it remains NP-hard even if the given
formula $\phi $ is restricted to be a monotone XOR($3$) formula. For the
sake of completeness we provide in the next lemma a proof of this fact.

\begin{lemma}
\label{xor-np-hard-lem}Monotone Max-XOR($3$) is NP-hard.
\end{lemma}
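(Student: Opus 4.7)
The plan is to recast Monotone Max-XOR$(3)$ as the Max-Cut problem on graphs of maximum degree at most $3$, and then invoke the classical NP-hardness of the latter. Given a monotone XOR$(3)$ formula $\phi$ with variables $x_1,\ldots,x_n$ and clauses $(x_{i_\ell}\oplus x_{j_\ell})$ for $\ell=1,\ldots,m$, I would associate the (multi)graph $G_{\phi}$ with vertex set $\{v_1,\ldots,v_n\}$ and with one edge $v_{i_\ell}v_{j_\ell}$ per clause. Any truth assignment $\tau$ induces a vertex bipartition $(V_T,V_F)$, where $V_T=\{v_i:\tau(x_i)=\text{true}\}$, and the clause $(x_i\oplus x_j)$ is XOR-satisfied by $\tau$ precisely when $\tau(x_i)\neq\tau(x_j)$, i.e.\ when the edge $v_iv_j$ crosses the bipartition. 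Hence the maximum number of XOR-satisfied clauses of $\phi$ equals the Max-Cut value of $G_{\phi}$. Moreover, the degree of $v_i$ in $G_{\phi}$ is exactly the number of clauses in which $x_i$ appears, so the occurrence bound $k=3$ translates directly into $G_{\phi}$ having maximum degree at most $3$. Since every graph of maximum degree $3$ arises as $G_{\phi}$ for some monotone XOR$(3)$ formula, Monotone Max-XOR$(3)$ is polynomial-time equivalent to Max-Cut on graphs of maximum degree at most $3$.

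The remaining step is to establish the NP-hardness of Max-Cut on graphs of maximum degree $3$, a classical result due to Yannakakis. For a fully self-contained proof I would reduce from Max-Cut on general graphs (NP-hard by Karp) using a standard degree-reduction construction: each vertex $v$ of degree $d\geq 4$ is replaced by a cycle or tree consisting of $\Theta(d)$ vertices of degree at most $3$, and the $d$ neighbours of $v$ are attached one-to-one to distinct gadget vertices. The gadget is engineered so that in every optimum bipartition of the reduced graph, all vertices of the gadget associated with $v$ receive the same label, and so that the optimal cut value changes by an explicit additive constant depending only on the gadgets.

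The main obstacle, should one want a from-scratch proof, is precisely this gadget analysis: one must verify that every optimum of the reduced instance assigns a \emph{consistent} label to each gadget (so that it projects back to a well-defined bipartition of the original graph), and that the correspondence between Max-Cut values is exact up to a computable additive constant. This step is routine but somewhat delicate. Once it is in place, the NP-hardness of Monotone Max-XOR$(3)$ follows immediately by composing the trivial reduction in the first paragraph with the degree-reduction.
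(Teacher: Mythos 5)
Your proposal is correct and takes essentially the same route as the paper: both identify Monotone Max-XOR$(3)$ with Max-Cut on low-degree graphs via the natural correspondence between truth assignments and vertex bipartitions, then invoke Yannakakis's NP-hardness result. The only difference is that the paper reduces specifically from Max-Cut on \emph{cubic} (exactly $3$-regular) graphs---which yields formulas where every variable occurs in \emph{exactly} three clauses, a fact the paper silently relies on later when it asserts $m=3n/2$---and simply cites Yannakakis rather than sketching the (unneeded) degree-reduction gadget you outline.
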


\begin{proof}
The Max-Cut problem is NP-hard, even when restricted to cubic graphs, i.e.
to graphs $G=(V,E)$ where $|N(u)|=3$ for every vertex $u\in V$ \cite%
{Yannakakis78}. Consider a cubic graph $G=(V,E)$. We construct from $G$ a
monotone XOR($3$) formula $\phi $ as follows. First we define a boolean
variable $x_{u}$ for every vertex $u\in V$. Furthermore for every edge $%
uv\in E$ we define the XOR-clause $(x_{u}\oplus x_{v})$ and we define $\phi $
to be the conjunction of all these clauses. Then, $G$ has a $2$-partition
(i.e.~a cut) of size $k$ if and only if there exists a satisfying assignment 
$\tau $ of $\phi $ that XOR-satisfies $k$ clauses of $\phi $. Indeed, for
the first direction, consider such a $2$-partition of $G$ into sets $V_{1}$
and $V_{2}$ with $k$ edges between $V_{1}$ and $V_{2}$, and define the truth
assignment $\tau $ such that $x_{u}=1$ if $u\in V_{1}$ and $x_{u}=0$ if $%
u\in V_{2}$. Then $\tau $ satisfies $k$ clauses of $\phi $. For the opposite
direction, consider a truth assignment $\tau $ that XOR-satisfies $k$
clauses of $\phi $ and define a $2$-partition of $G$ into sets $V_{1}$ and $%
V_{2}$ such that $u\in V_{1}$ if $x_{u}=1$ and $u\in V_{2}$ if $x_{u}=0$.
Then this $2$-partition has size $k$. This completes the proof of the lemma.%
\qed
\end{proof}

\section{Construction of the unit disk graph $G_{n}$\label{Gn-sec}}

In this section we present the construction of the auxiliary unit disk graph 
$G_{n}$, given a monotone XOR($3$)-formula $\phi $ with $n$ variables. Note
that $G_{n}$ depends only on the size of the formula $\phi $ and not on $%
\phi $ itself. Using this auxiliary graph $G_{n}$ we will then construct in
Section~\ref{H-phi-sec} the unit disk graph $H_{\phi }$, which depends also
on $\phi $ itself, completing thus the NP-hardness reduction from monotone
Max-XOR($3$) to the minimum bisection problem on unit disk graphs.

We define $G_{n}$ by providing a unit disk representation $R_{n}$ for it.
For simplicity of the presentation of this construction, we first define a
set of halflines on the plane, on which all centers of the disks are located
in the representation~$R_{n}$.

\subsection{The half-lines containing the disk centers\label{tracks-subsec}}

Denote the variables of the formula $\phi $ by $\{x_{1},x_{2},\ldots
,x_{n}\} $. Define for simplicity the values $d_{1}=5.6$ and $d_{2}=7.2$.
For every variable $x_{i}$, where $i\in \{1,2,\ldots ,n\}$, we define the
following four points in the plane:

\begin{itemize}
\item ${p_{i,0}=(2id_{1},2(i-1)d_{2})}$ and ${%
p_{i,1}=((2i-1)d_{1},(2i-1)d_{2})}$, which are called the \emph{bend points}
for variable $x_{i}$, and

\item ${q_{i,0}=((2i-1)d_{1},2(i-1)d_{2})}$ and ${%
r_{i,0}=(2id_{1},(2i-1)d_{2})}$, which are called the \emph{auxiliary points}
for variable~$x_{i}$.
\end{itemize}

Then, starting from point $p_{i,j}$, where $i\in \{1,2,\ldots ,n\}$ and $%
j\in \{0,1\}$, we draw in the plane one halfline parallel to the $x$-axis
pointing to the left and one halfline parallel to the $y$-axis pointing
upwards. The union of these two halflines on the plane is called the \emph{%
track} $T_{i,j}$ of point $p_{i,j}$. Note that, by definition of the points $%
p_{i,j}$, the tracks $T_{i,0}$ and $T_{i,1}$ do not have any common point,
and that, whenever $i\neq k$, the tracks ${T_{i,j}}$ and ${T_{k,\ell }}$
have exactly one common point. Furthermore note that, for every $i\in
\{1,2,\ldots ,n\}$, both auxiliary points $q_{i,0}$ and $r_{i,0}$ belong to
the track $T_{i,0}$. The construction of the tracks is illustrated in Figure~%
\ref{track-fig}.

\begin{figure}[h]
\centering
\includegraphics[width=0.5\textwidth]{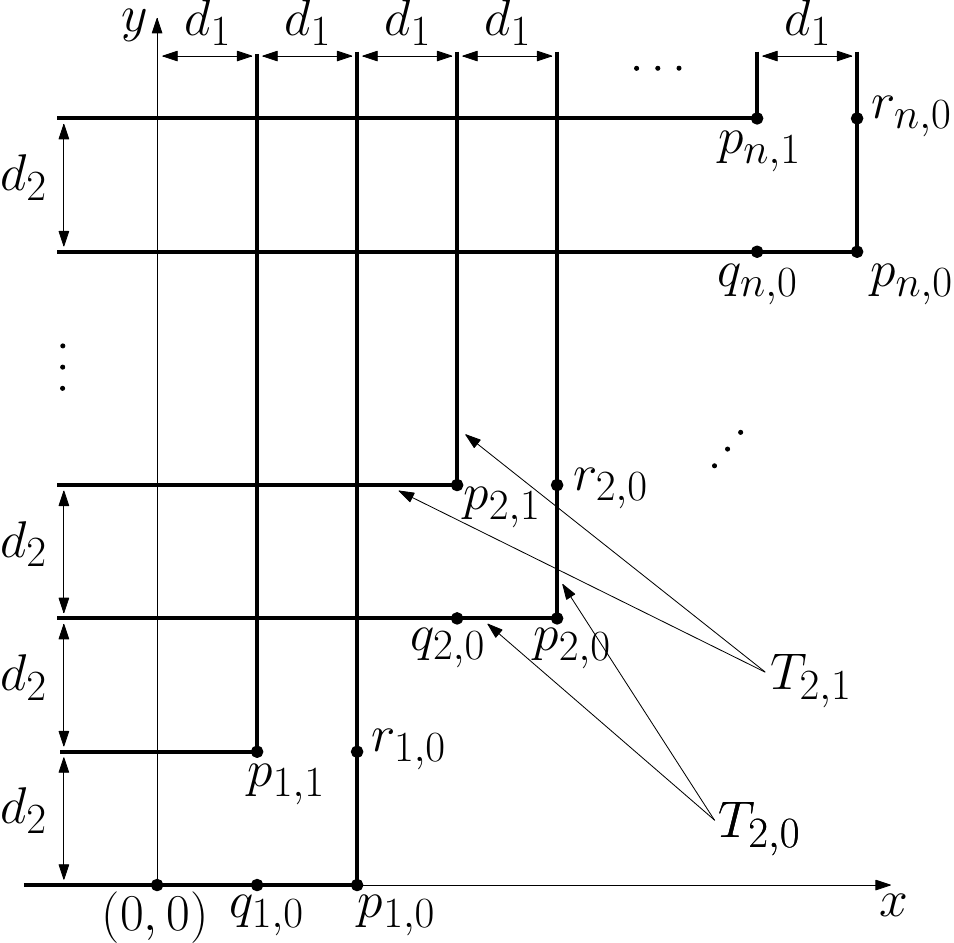}
\caption{The construction of the points~$p_{i,j}$ and the tracks~$T_{i,j}$,
where~${1\leq i\leq n}$ and~${j\in \{0,1\}}$.}
\label{track-fig}
\end{figure}

We will construct the unit disk representation $R_{n}$ of the graph $G_{n}$
in such a way that the union of all tracks $T_{i,j}$ will contain the
centers of all disks in $R_{n}$.The construction of $R_{n}$ is done by
repeatedly placing on the tracks $T_{i,j}$ (cf.~Figure~\ref{track-fig})
multiple copies of three particular unit disk representations~$Q_{1}(p)$,~$%
Q_{2}(p)$,~and $Q_{3}(p)$ (each of them including $2n^{6}+2$ unit disks),
which we use as gadgets in our construction. Before we define these gadgets
we need to define first the notion of a $(t,p)$-crowd.

\begin{definition}
\label{crowd-def}Let $\varepsilon >0$ be infinitesimally small. Let $t\geq 1$
and ${p=(p}_{x}{,p}_{y}{)}$ be a point in the plane. Then, the \emph{%
horizontal }$(t,p)$\emph{-crowd} (resp.~the \emph{vertical }$(t,p)$\emph{%
-crowd}) is a set of $t$ unit disks whose centers are equally distributed
between the points ${(p}_{x}{-\varepsilon ,p}_{y}{)}$ and ${(p}_{x}{%
+\varepsilon ,p}_{y}{)}$ (resp.~between the points ${(p}_{x}{,p}_{y}{%
-\varepsilon )}$ and~${(p}_{x}{,p}_{y}{+\varepsilon )}$).
\end{definition}

Note that, by Definition~\ref{crowd-def}, both the horizontal and the
vertical $(t,p)$-crowds represent a clique of $t$ vertices. Furthermore note
that both the horizontal and the vertical $(1,p)$-crowds consist of a single
unit disk centered at point $p$. For simplicity of the presentation, we will
graphically depict in the following a $(t,p)$-crowd just by a disk with a
dashed contour centered at point $p$, and having the number $t$ written next
to it cf.~Figure~\ref{crowd-fig}. Furthermore, whenever the point $p$ lies
on the horizontal (resp.~vertical) halfline of a track $T_{i,j}$, then any $%
(t,p)$-crowd will be meant to be a horizontal (resp.~vertical) $(t,p)$%
-crowd. For instance, a horizontal $(t,p)$-crowd is illustrated in Figure~%
\ref{crowd-fig}.

\begin{figure}[tbh]
\centering
\includegraphics[scale=0.68]{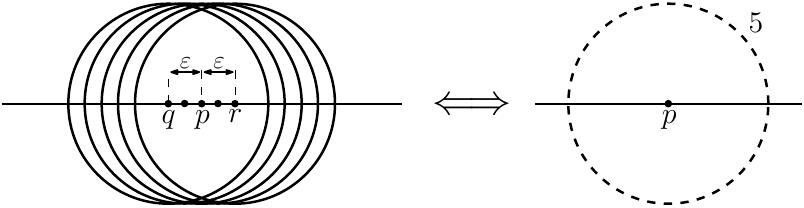}
\caption{A horizontal $(t,p)$-crowd and an equivalent way to represent it
using a disk with a dashed contour centered at point $p$, where $t=5$, $%
\protect\varepsilon>0$ is infinitesimally small, $p=(p_{x},p_{y})$, $%
q=(p_{x}-\protect\varepsilon,p_{y})$, and $r=(p_{x}+\protect\varepsilon%
,p_{y})$.}
\label{crowd-fig}
\end{figure}

\subsection{Three useful gadgets\label{gadgets-subsec}}

Let $p=(p_{x},p_{y})$ be a point on a track $T_{i,j}$. Whenever $p$ lies on
the horizontal halfline of $T_{i,j}$, we define for any $\delta >0$ (with a
slight abuse of notation) the points $p-\delta =(p_{x}-\delta ,p_{y})$ and $%
p+\delta =(p_{x}+\delta ,p_{y})$. Similarly, whenever $p$ lies on the
vertical halfline of $T_{i,j}$, we define for any $\delta >0$ the points $%
p-\delta =(p_{x},p_{y}-\delta )$ and $p+\delta =(p_{x},p_{y}+\delta )$.
Assume first that $p$ lies on the \emph{horizontal} halfline of~$T_{i,j}$.
Then we define the unit disk representation $Q_{1}(p)$ as follows:

\begin{itemize}
\item $Q_{1}(p)$ consists of the horizontal ${(n^{3},p+0.9)}$-crowd, the
horizontal ${(2n^{6}-2n^{3}+2,p+2.8)}$-crowd, and the horizontal ${%
(n^{3},p+4.7)}$-crowd, as it is illustrated in Figure~\ref{gadget-fig-1}.
\end{itemize}

Assume now that $p$ lies on the \emph{vertical} halfline of $T_{i,j}$, we
define the unit disk representations $Q_{2}(p)$ and $Q_{3}(p)$ as follows:

\begin{itemize}
\item $Q_{2}(p)$ consists of a single unit disk centered at point $p$, the
vertical ${(n^{6},p+1.8)}$-crowd, a single unit disk centered at point $%
p+3.6 $, and the vertical ${(n^{6},p+5.4)}$-crowd, as it is illustrated in
Figure~\ref{gadget-fig-2}.

\item $Q_{3}(p)$ consists of a single unit disk centered at point $p$, the
vertical ${(n^{6},p+1.7)}$-crowd, a single unit disk centered at point $%
p+3.6 $, and the vertical ${(n^{6},p+5.4)}$-crowd, as it is illustrated in
Figure~\ref{gadget-fig-3}.
\end{itemize}

\begin{figure}[tbh]
\centering
\subfigure[]{ \label{gadget-fig-1}
\includegraphics[scale=0.612]{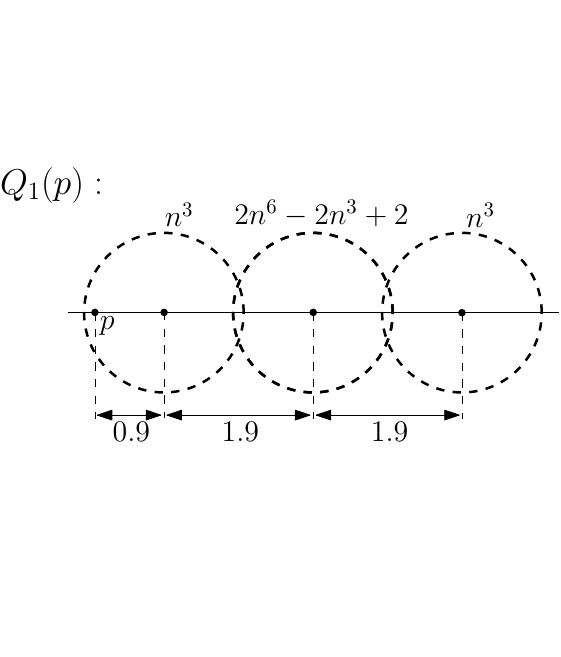}} \hspace{0.2cm} 
\subfigure[]{ \label{gadget-fig-2}
\includegraphics[scale=0.612]{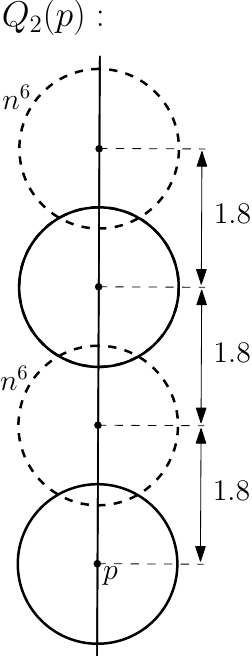}} \hspace{0.2cm} 
\subfigure[]{ \label{gadget-fig-3}
\includegraphics[scale=0.612]{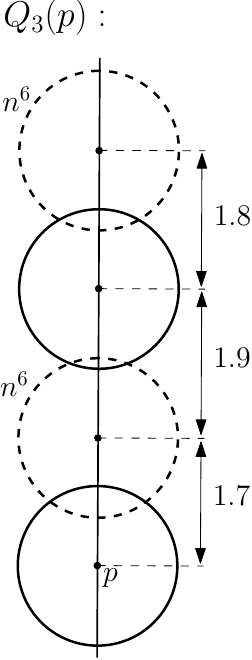}} 
\caption{The unit disk representations $Q_{1}(p)$, $Q_{2}(p)$, and $Q_{3}(p)$%
, where $p$ is a point on one of the tracks $T_{i,j}$, where $1\leq i\leq n$
and $j\in \{0,1\}$.}
\label{gadget-fig}
\end{figure}

In the above definition of the unit disk representation $Q_{k}(p)$, where~$%
k\in \{1,2,3\}$, the point $p$ is called the \emph{origin} of $Q_{k}(p)$.
Note that the origin $p$ of the representation $Q_{2}(p)$ (resp.~$Q_{3}(p)$)
is the center of a unit disk in $Q_{2}(p)$ (resp.~$Q_{3}(p)$). In contrast,
the origin $p$ of the representation $Q_{1}(p)$ is not the center of any
unit disk of $Q_{1}(p)$, however $p$ lies in $Q_{1}(p)$ within the area of
each of the $n^{3}$ unit disks of the horizontal ${(n^{3},p+0.9)}$-crowd of $%
Q_{1}(p)$. For every point $p$, each of $Q_{1}(p)$, $Q_{2}(p)$, and $%
Q_{3}(p) $ has in total $2n^{6}+2$ unit disks (cf.~Figure~\ref{gadget-fig}).

Furthermore, for any $i\in \{1,2,3\}$ and any two points $p$ and $p^{\prime
} $ in the plane, the unit disk representation $Q_{i}(p^{\prime })$ is an
isomorphic copy of the representation $Q_{i}(p)$, which is placed at the
origin $p^{\prime }$ instead of the origin $p$. Moreover, for any point $p$
in the vertical halfline of a track $T_{i,j}$, the unit disk representations 
$Q_{2}(p)$ and $Q_{3}(p)$ are almost identical: their only difference is
that the vertical ${(n^{6},p+1.8)}$-crowd in $Q_{2}(p)$ is replaced by the
vertical ${(n^{6},p+1.7)}$-crowd in $Q_{3}(p)$, i.e.~this whole crowd is
just moved downwards by $0.1$ in $Q_{3}(p)$.

\begin{observation}
\label{Qk(p)-obs}Let $k\in \{1,2,3\}$ and $p\in T_{i,j}$, where $i\in
\{1,2,\ldots ,n\}$ and $j\in \{0,1\}$. For every two adjacent vertices $u,v$
in the unit disk graph defined by $Q_{k}(p)$, $u$ and $v$ belong to a clique
of size at least $n^{6}+1$.
\end{observation}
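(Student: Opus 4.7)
The plan is to decompose each gadget $Q_k(p)$ into its natural \emph{components} --- each crowd is one component, each single disk is a one-vertex component --- and to observe two things: (i) each component is itself a clique (trivially for a single disk, and by Definition~\ref{crowd-def} for a crowd), and (ii) two components merge into a single clique exactly when the distance between their base points along the relevant halfline is strictly less than~$2$. Since the $\varepsilon$ used in Definition~\ref{crowd-def} is infinitesimal, the $\pm\varepsilon$ jitter of the crowd centers does not affect strict comparisons against~$2$. Once this framework is in place, any edge of the gadget must connect two components whose base points lie within distance~$2$, so each edge is contained in some merged clique; it then suffices to verify that every merged clique that contains an edge has size at least $n^6+1$.

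I would then go through the three gadgets in turn. For $Q_1(p)$ the three horizontal crowds sit at offsets $0.9$, $2.8$, $4.7$ from $p$; the consecutive gaps equal $1.9<2$ while the outer gap equals $3.8>2$, so the first two crowds merge into a clique of size $n^3+(2n^6-2n^3+2)=2n^6-n^3+2\ge n^6+1$, similarly for the second and third, and these two merged cliques cover every edge of $Q_1(p)$. For $Q_2(p)$ the four vertical components lie at offsets $0$, $1.8$, $3.6$, $5.4$; all three consecutive gaps equal $1.8<2$, while every non-consecutive gap is $\ge 3.6>2$, so only consecutive components merge, and each resulting merged clique consists of one crowd together with one adjacent single disk, giving size $n^6+1$. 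For $Q_3(p)$ the offsets become $0$, $1.7$, $3.6$, $5.4$; the consecutive gaps are now $1.7$, $1.9$, $1.8$, still all below~$2$, while the non-consecutive gaps remain $\ge 3.6$, so the same conclusion holds.

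The only routine ingredient is the distance arithmetic against the threshold~$2$; the only subtlety is carefully distinguishing ``consecutive'' from ``non-consecutive'' components along each halfline, so as not to overlook a pair of components that could in principle merge. I do not anticipate any real obstacle.
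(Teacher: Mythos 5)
Your proof is correct and is essentially the direct verification that the paper leaves implicit by stating this as an unproved Observation: you decompose each $Q_k(p)$ into crowds and single disks, check which pairs merge into cliques by comparing center offsets against the threshold~$2$ (correctly noting that the infinitesimal $\varepsilon$ in Definition~\ref{crowd-def} cannot flip a strict inequality), and confirm that each merged clique covering an edge has the required size. The arithmetic is right in all three gadgets, including the subtle point for $Q_1(p)$ that an edge internal to an $n^3$-crowd is nonetheless contained in the larger merged clique of size $2n^6-n^3+2\ge n^6+1$.
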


\subsection{The unit disk representation $R_{n}$ of $G_{n}$\label%
{representation-Rn-subsec}}

We are now ready to iteratively construct the unit disk representation $%
R_{n} $ of the graph $G_{n}$, using the above gadgets $Q_{1}(p)$, $Q_{2}(p)$%
, and $Q_{3}(p)$, as follows:

\begin{itemize}
\item[(a)] for every $i\in \{1,2,\ldots ,n\}$ and for every $j\in \{0,1\}$,
add to~$R_{n}$:

\begin{itemize}
\item the gadget $Q_{1}(p)$, with its origin at the point $%
p=(0,(2(i-1)+j)d_{2})$, 
\end{itemize}

\item[(b)] for every $i\in \{1,2,\ldots ,n\}$, add to~$R_{n}$:

\begin{itemize}
\item the gadgets $Q_{1}(q_{i,0})$, $Q_{2}(r_{i,0})$, $Q_{3}(p_{i,0})$, and $%
Q_{3}(p_{i,1})$,

\item the gadgets $Q_{1}(p)$ and $Q_{1}(p^{\prime })$, with their origin at
the points ${p=(-d_{1},(2i-1)d_{2})}$ and $p^{\prime }=(-2d_{1},(2i-1)d_{2})$
of the track $T_{i,1}$, respectively, 
\end{itemize}

\item[(c)] for every ${i,k\in \{1,2,\ldots ,n\}}$ and for every ${j,\ell \in
\{0,1\}}$, where ${i\neq k}$, add to~$R_{n}$:

\begin{itemize}
\item the gadgets $Q_{1}(p)$ and $Q_{2}(p)$, with their origin at the
(unique) point $p$ that lies on the intersection of the tracks $T_{i,j}$ and 
$T_{k,\ell }$.
\end{itemize}
\end{itemize}

Similarly to Figure~\ref{track-fig}, we illustrate in Figure~\ref%
{gadgets-track-fig} the placement of the gadgets $Q_{1}(p)$, $Q_{2}(p)$, and 
$Q_{3}(p)$ in the unit disk representation $R_{n}$, for the various points $p
$ according to the above construction of $R_{n}$. In this figure, the
placement of a gadget $Q_{1}(p)$ (resp.~of a gadget $Q_{2}(p)$ and $Q_{3}(p)$) 
is depicted by a circled ``1'' (resp.~by a circled ``2'' and ``3'').

\begin{figure}[h]
\centering
\includegraphics[width=0.8\textwidth]{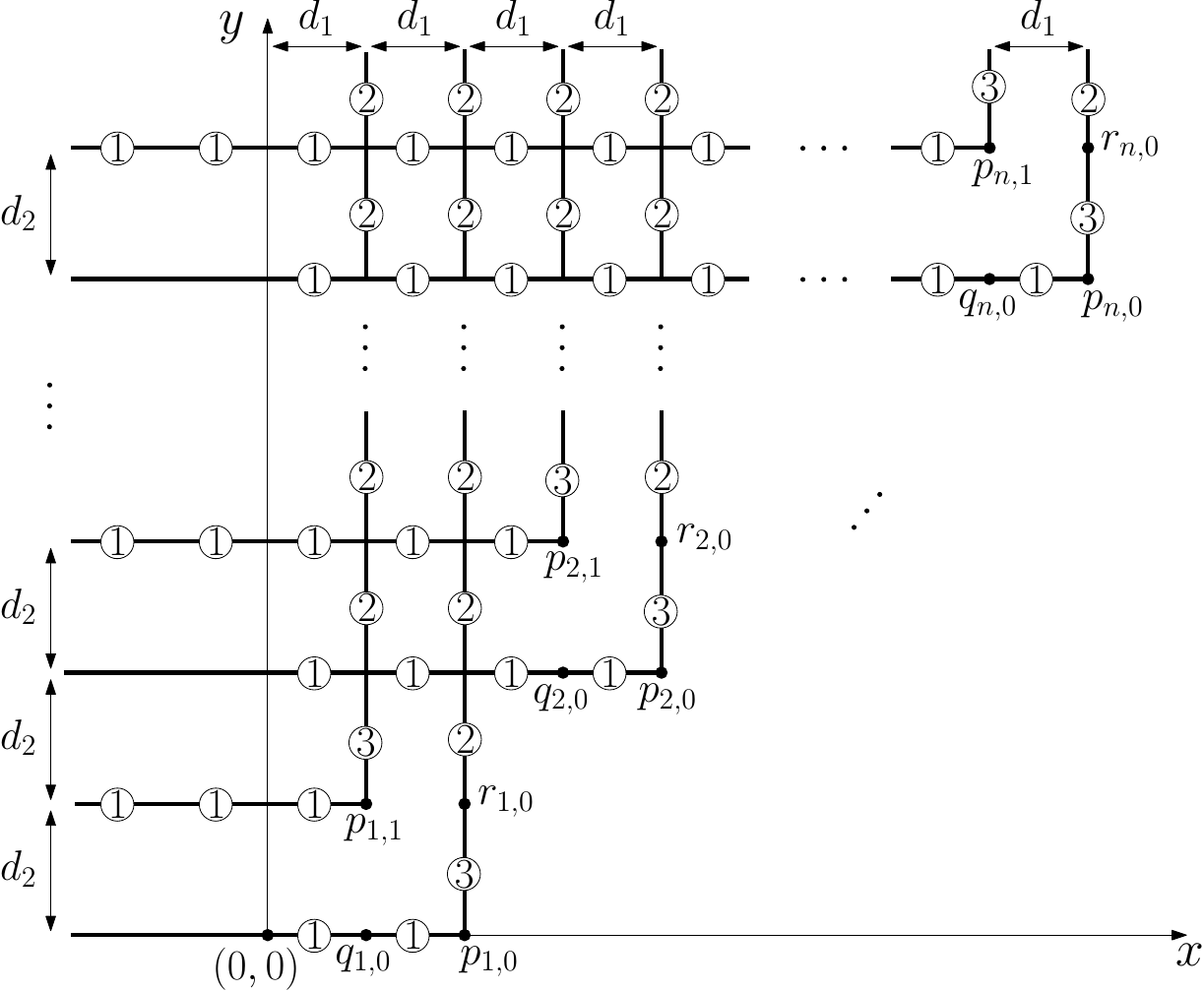}
\caption{The placement of the gadgets $Q_{1}(p)$, $Q_{2}(p)$, and $Q_{3}(p)$
in the unit disk representation $R_{n}$, for the various points $p$.}
\label{gadgets-track-fig}
\end{figure}

This completes the construction of the unit disk representation $R_{n}$ of
the graph $G_{n}=(V_{n},E_{n})$, in which the centers of all unit disks lie
on some track~$T_{i,j}$, where $i\in \{1,2,\ldots ,n\}$ and $j\in \{0,1\}$.

\begin{definition}
\label{Sij-def}Let $i\in \{1,2,\ldots ,n\}$ and $j\in \{0,1\}$. The vertex
set $S_{i,j}\subseteq V_{n}$ consists of all vertices of those copies of the
gadgets $Q_{1}(p)$, $Q_{2}(p)$, and $Q_{3}(p)$, whose origin $p$ belongs to
the track $T_{i,j}$.
\end{definition}

For every $v\in V_{n}$ let $c_{v}$ be the center of its unit disk in the
representation $R_{n}$. Note that, by Definition~\ref{Sij-def}, the unique
vertex ${v\in V_{n}}$, for which ${c_{v}\in T_{i,j}\cap T_{k,\ell }}$, where 
$i<k$ (i.e.~$c_{v}$ lies on the intersection of the vertical halfline of $%
T_{i,j}$ with the horizontal halfline of~${T_{k,\ell }}$), we have that~${%
v\in S_{i,j}}$. Furthermore note that $\{S_{i,j}:1\leq i\leq n,j\in
\{0,1\}\} $ is a partition of the vertex set $V_{n}$ of $G_{n}$. In the next
lemma we show that this is also a balanced $2n$-partition of $G_{n}$, i.e.~$%
|S_{i,j}|=|S_{k,\ell }|$ for every $i,k\in \{1,2,\ldots ,n\}$ and $j,\ell
\in \{0,1\}$.

\begin{lemma}
\label{number-vertices-track-lem}For every $i\in \{1,2,\ldots ,n\}$ and $%
j\in \{0,1\}$, we have that $|S_{i,j}|=4(n+1)(n^{6}+1)$.
\end{lemma}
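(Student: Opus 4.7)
The plan is to count the number of gadget copies $Q_{k}(p)$ (with $k\in\{1,2,3\}$) whose vertices are associated with $S_{i,j}$, and then multiply by $2n^{6}+2$, the number of unit disks per gadget. If I can show that exactly $2(n+1)$ gadget copies contribute to $S_{i,j}$, then $|S_{i,j}|=2(n+1)(2n^{6}+2)=4(n+1)(n^{6}+1)$ follows immediately.

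As a global sanity check, I would first count all placed gadget copies and divide by $2n$: step~(a) places $2n$ copies, step~(b) places $6n$ copies, and step~(c) places two copies at each of the $\binom{2n}{2}-n=2n(n-1)$ pairs of intersecting tracks (discarding the $n$ non-intersecting pairs $\{T_{i,0},T_{i,1}\}$), contributing $4n(n-1)$. The grand total is $2n+6n+4n(n-1)=4n(n+1)$, which averages to $2(n+1)$ copies per track, matching the target.

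I would then verify uniformity by a direct enumeration on a fixed track $T_{i,j}$. A coordinate check shows that step~(a) places exactly one gadget on $T_{i,j}$ (whose origin is $(0,(2(i-1)+j)d_{2})$), and that among the six gadgets of step~(b) for index $i'$, exactly three have their origin on $T_{i,j}$ (for $j=0$, the triple $Q_{1}(q_{i,0}),Q_{2}(r_{i,0}),Q_{3}(p_{i,0})$; for $j=1$, the triple $Q_{3}(p_{i,1})$ and the two $Q_{1}$'s at $(-d_{1},(2i-1)d_{2})$ and $(-2d_{1},(2i-1)d_{2})$); no other step~(b) origin falls on $T_{i,j}$. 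For step~(c), of the $2(n-1)$ intersections of $T_{i,j}$ with the other tracks, $2(n-i)$ of them have $T_{i,j}$ as the lower-indexed (vertical) track---each contributing its $Q_{2}$-gadget (whose disks all lie on the vertical halfline of $T_{i,j}$) to $S_{i,j}$---while $2(i-1)$ of them have $T_{i,j}$ as the higher-indexed (horizontal) track---each contributing its $Q_{1}$-gadget (whose disks lie on the horizontal halfline of $T_{i,j}$) to $S_{i,j}$. This yields $2(n-i)+2(i-1)=2(n-1)$ gadget copies from step~(c). Summing, $1+3+2(n-1)=2(n+1)$, as desired.

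The main obstacle is the bookkeeping of origins that coincide with intersection points of two tracks---most notably $r_{i,0}$, which lies on both $T_{i,0}$ and $T_{i+1,0}$ for $i\leq n-1$, so that it appears both as a step~(b) origin and as a step~(c) intersection point. The assignment convention stated in the remark following Definition~\ref{Sij-def}, which places each unit disk whose centre lies on a track intersection into the lower-indexed track, resolves this ambiguity and guarantees that every one of the $4n(n+1)$ placed gadget copies belongs to a unique set $S_{i,j}$, making the enumeration consistent.
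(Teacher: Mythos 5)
Your proposal is correct and takes essentially the same approach as the paper: a direct enumeration showing that exactly $2(n+1)$ gadget copies (one from step (a), three from step (b), and $2(n-i)+2(i-1)=2(n-1)$ from step (c)) contribute to $S_{i,j}$, then multiplying by $2n^{6}+2$. The only addition is your global sanity check ($4n(n+1)$ total gadgets averaging $2(n+1)$ per track), which the paper omits but which is a reasonable verification.
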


\begin{proof}
Let first $j=0$. At part (a) of the above construction of the unit disk
representation $R_{n}$, the set $S_{i,0}$ receives the vertices of one copy
of the gadget $Q_{1}(p)$. At part (b) of the construction, $S_{i,0}$
receives the vertices of the gadgets $Q_{1}(q_{i,0})$, $Q_{2}(r_{i,0})$, and 
$Q_{3}(p_{i,0})$ (only due to the first bullet of part (b), since~${j=0}$).
Furthermore, at part (c) of the construction, $S_{i,0}$ receives the
vertices of~$2(i-1)$ copies of the gadget $Q_{1}(p)$ (i.e.~one for every
intersection of the horizontal halfline of $T_{i,0}$ with the vertical
halfline of a track $T_{k,\ell }$, where $k<i$) and the vertices of~$2(n-i)$
copies of the gadget $Q_{2}(p)$ (i.e.~one for every intersection of the
vertical halfline of $T_{i,0}$ with the horizontal halfline of a track $%
T_{k,\ell }$, where $k>i$). Note that this assignment of copies of the
gadgets $Q_{1}(p),Q_{2}(p),Q_{3}(p)$ to the vertices of~$S_{i,0}$ is
consistent with the definition of the partition of $V_{n}$ into $%
\{S_{i,j}\}_{i,j}$. Therefore, since each copy of the gadgets $%
Q_{1}(p),Q_{2}(p),Q_{3}(p)$ has $2n^{6}+2$ vertices, the set $S_{i,0}$ has
in total $(1+3+2(i-1)+2(n-i))\cdot (2n^{6}+2)=4(n+1)\cdot (n^{6}+1)$
vertices.

Let now $j=1$. At part (a) of the construction of $R_{n}$, the set $S_{i,1}$
receives similarly to the above the vertices of one copy of $Q_{1}(p)$. At
part (b) of the construction, $S_{i,1}$ receives the vertices of $%
Q_{3}(p_{i,1})$ (due to the first bullet) and the vertices of one copy of
each gadget $Q_{1}(p)$ and $Q_{1}(p^{\prime })$ (due to the second bullet).
Furthermore, at part (c) of the construction, $S_{i,1}$ receives similarly
to the above the vertices of $2(i-1)$ copies of the gadget $Q_{1}(p)$ and
the vertices of $2(n-i)$ copies of the gadget $Q_{2}(p)$. Note that this
assignment of copies of the gadgets $Q_{1}(p),Q_{2}(p),Q_{3}(p)$ to the
vertices of $S_{i,1}$ is again consistent with the definition of the
partition of $V_{n}$ into $\{S_{i,j}\}_{i,j}$. Therefore, since each copy of
the gadgets $Q_{1}(p),Q_{2}(p),Q_{3}(p)$ has $2n^{6}+2$ vertices, the set $%
S_{i,1}$ has in total $(1+1+2+2(i-1)+2(n-i))\cdot (2n^{6}+2)=4(n+1)\cdot
(n^{6}+1)$ vertices.\qed
\end{proof}

Consider the intersection point $p$ of two tracks $T_{i,j}$ and $T_{k,\ell }$%
, where $i\neq k$. Assume without loss of generality that $i<k$, i.e.~$p$
belongs to the vertical halfline of $T_{i,j}$ and on the horizontal halfline
of $T_{k,\ell }$, cf.~Figure~\ref{intersection-tracks-disks-fig-1}. Then $p$
is the origin of the gadget $Q_{2}(p)$ in the representation $R_{n}$ (cf.
part (c) of the construction of $R_{n}$). Therefore $p$ is the center of a
unit disk in $R_{n}$, i.e.~$p=c_{v}$ for some $v\in S_{i,j}\subseteq V_{n}$.
All unit disks of $R_{n}$ that intersect with the disk centered at point $p$
is shown in Figure~\ref{intersection-tracks-disks-fig-1}. Furthermore, the
induced subgraph $G_{n}[\{v\}\cup N(v)]$ on the vertices of $G_{n}$, which
correspond to these disks of Figure~\ref{intersection-tracks-disks-fig-1},
is shown in Figure~\ref{intersection-tracks-disks-fig-3}. In Figure~\ref%
{intersection-tracks-disks-fig-3} we denote by $K_{n^{6}}$ and $K_{n^{3}}$
the cliques with $n^{6}$ and with $n^{3}$ vertices, respectively, and the
thick edge connecting the two $K_{n^{3}}$'s depicts the fact that all
vertices of the two $K_{n^{3}}$'s are adjacent to each other.

Now consider a bend point $p_{i,j}$ of a variable $x_{i}$, where $j\in
\{0,1\}$. Then $p_{i,j}$ is the origin of the gadget $Q_{3}(p_{i,j})$ in the
representation $R_{n}$ (cf.~the first bullet of part (b) of the construction
of $R_{n}$). Therefore $p_{i,j}$ is the center of a unit disk in $R_{n}$,
i.e.~$p=c_{v}$ for some $v\in S_{i,j}\subseteq V_{n}$. All unit disks of $%
R_{n}$ that intersect with the disk centered at point $p_{i,j}$ are shown in
Figure~\ref{intersection-tracks-disks-fig-2}. Furthermore, the induced
subgraph $G_{n}[\{v\}\cup N(v)]$ of $G_{n}$ that corresponds to the disks of
Figure~\ref{intersection-tracks-disks-fig-2}, is shown in Figure~\ref%
{intersection-tracks-disks-fig-4}. In both Figures~\ref%
{intersection-tracks-disks-fig-1} and~\ref{intersection-tracks-disks-fig-2},
the area of the intersection of two crowds (i.e.~disks with dashed contour)
is shaded gray for better visibility.

\begin{figure}[tbh]
\centering
\subfigure[]{ \label{intersection-tracks-disks-fig-1}
\includegraphics[scale=0.5]{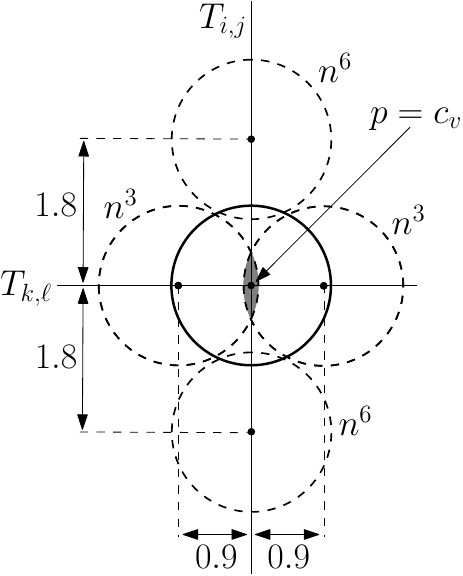}} \hspace{-0.38cm%
} 
\subfigure[]{ \label{intersection-tracks-disks-fig-2}
\includegraphics[scale=0.5]{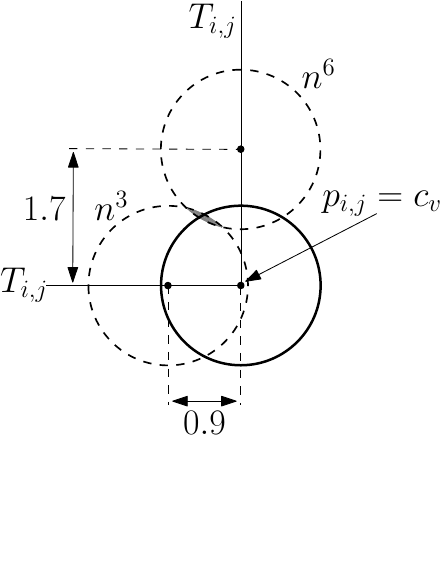}} \hspace{-0.9cm}
\subfigure[]{ \label{intersection-tracks-disks-fig-3}
\includegraphics[scale=0.512]{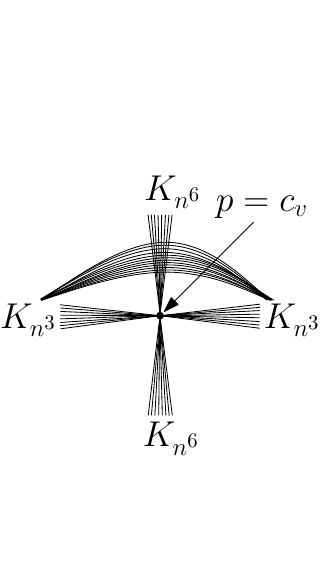}} \hspace{%
0.025cm} 
\subfigure[]{ \label{intersection-tracks-disks-fig-4}
\includegraphics[scale=0.512]{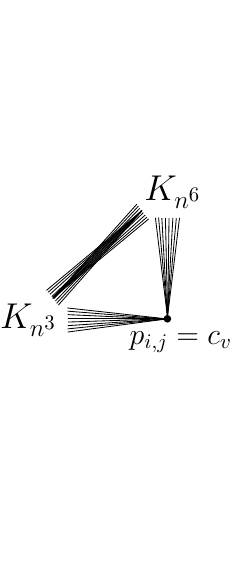}}
\caption{The disks in $R_{n}$ (a)~around the intersection point $p=c_{v}$ of
two tracks $T_{i,j}$ and $T_{k,\ell }$, where $i<k$, and (b)~around the bend
point $p_{i,j}=c_{v}$ of a variable $x_{i}$, where $j\in \{0,1\}$. (c)~The
induced subgraph of $G_{n}$ on the vertices of part~(a), and (d)~the induced
subgraph of $G_{n}$ for part~(b).}
\label{intersection-tracks-disks-fig}
\end{figure}

\begin{lemma}
\label{Sij-same-color}Consider an arbitrary bisection $\mathcal{B}$ of $%
G_{n} $ with size strictly less than~$n^{6}$. Then for every set $S_{i,j}$, $%
i\in \{1,2,\ldots ,n\}$ and $j\in \{0,1\}$, all vertices of $S_{i,j}$ belong
to the same color class of $\mathcal{B}$.
\end{lemma}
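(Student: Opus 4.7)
The plan is to exploit the cut bound to rule out any ``splitting'' of dense substructures of $G_n$. Concretely, if a clique of size $\geq n^{6}+1$ or a complete bipartite subgraph $K_{a,b}$ with $ab\geq n^{6}$ is split across the two sides of $\mathcal{B}$, then at least $n^{6}$ of its edges land in the cut, contradicting the hypothesis that the cut size is strictly less than $n^{6}$. Using this principle in two stages --- first within each gadget, and then between consecutive gadgets along the track $T_{i,j}$ --- I will show that $S_{i,j}$ lies entirely on one side.

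For the first stage I would apply Observation~\ref{Qk(p)-obs}: every edge $uv$ of a single gadget lies in a common clique $K$ with $|K|\geq n^{6}+1$. If $u$ and $v$ were on different sides, then $K$ would be split as $(a,|K|-a)$ with $1\leq a\leq|K|-1$, contributing $a(|K|-a)\geq|K|-1\geq n^{6}$ edges to the cut, a contradiction. Each of $Q_{1}(p),Q_{2}(p),Q_{3}(p)$ is internally connected as a graph, since its consecutive components have center-to-center distance at most $1.9<2$, so all vertices of any single gadget lie on the same side of $\mathcal{B}$.

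For the second stage I would exhibit, between each pair of consecutive gadgets on a track $T_{i,j}$, a complete bipartite subgraph with at least $n^{6}$ edges. The three representative bridges are: (i) two consecutive $Q_{1}$ gadgets on a horizontal halfline (origins $d_{1}=5.6$ apart), whose end and start $n^{3}$-crowds sit at center-distance $d_{1}-4.7+0.9=1.8$, forming $K_{n^{3},n^{3}}$ with $n^{6}$ edges; (ii) the transition at a bend, where the end $n^{3}$-crowd of the last $Q_{1}$ meets the first $n^{6}$-crowd of $Q_{3}(p_{i,j})$ at distance $\sqrt{0.9^{2}+1.7^{2}}<2$, forming $K_{n^{3},n^{6}}$ with $n^{9}$ edges; and (iii) consecutive $Q_{2}$ (or $Q_{3}$-then-$Q_{2}$) gadgets on a vertical halfline (origins $d_{2}=7.2$ apart), whose end $n^{6}$-crowd is adjacent to the single origin-disk of the next gadget at distance $d_{2}-5.4=1.8$, forming $K_{n^{6},1}$ with $n^{6}$ edges. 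Since each gadget is already monochromatic by the first stage, placing two such gadgets on opposite sides would cut all the $\geq n^{6}$ edges of the bridge, yielding the desired contradiction.

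The main obstacle I expect is the bookkeeping: enumerating all gadgets on each track $T_{i,j}$ by reading off the origins from the coordinates of $p_{i,j}$, $q_{i,0}$, $r_{i,0}$, and every track intersection point, and checking that every two consecutive origins on $T_{i,j}$ are indeed linked by one of the three bridges above. The values $d_{1}=5.6$ and $d_{2}=7.2$ are calibrated precisely so that the gap between one gadget's final component and the next gadget's initial component is $1.8$, just below the unit-disk adjacency threshold $2$. Once the chain of strong bridges is verified, transitivity of the ``same side'' relation forces all of $S_{i,j}$ into a single color class of $\mathcal{B}$.
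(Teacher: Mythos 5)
Your proposal is correct and follows essentially the same two-stage argument as the paper's proof: first, Observation~\ref{Qk(p)-obs} forces each individual gadget $Q_k(p)$ to be monochromatic (splitting any clique of size $\geq n^6+1$ would cost $\geq n^6$ cut edges), and second, the consecutive gadgets along each track are linked by high-degree bridges (the $K_{n^3,n^3}$ between successive $Q_1$'s, the $K_{n^3,n^6}$ at the bend, and the $K_{n^6,1}$ between successive vertical gadgets) whose cross-edges exceed $n^6$, forcing all gadgets in $S_{i,j}$ into the same color class. Your distance calculations ($1.8$, $\sqrt{0.9^2+1.7^2}<2$) and edge counts match the paper's; the only cosmetic difference is that you phrase the bridges as complete bipartite subgraphs while the paper notes the union is a clique, but both tally the same $\geq n^6$ cut edges.
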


\begin{proof}
Let $i\in \{1,2,\ldots ,n\}$ and $j\in \{0,1\}$. Consider an arbitrary
gadget $Q_{k}(p)$ of $S_{i,j}$, where $k\in \{1,2,3\}$ and $p\in T_{i,j}$,
cf.~Definition~\ref{Sij-def}. Assume that there exist at least two vertices
of $Q_{k}(p)$ that belong to different color classes in the bisection $%
\mathcal{B}$ of $G_{n}$. Then, since the induced subgraph of $G_{n}$ on the
vertices of $Q_{k}(p)$ is connected, there exist at least two adjacent
vertices $u$ and $v$ in this subgraph that belong to two different color
classes of $\mathcal{B}$. Recall by Observation~\ref{Qk(p)-obs} that $u$ and 
$v$ belong to a clique $C$ of size at least $n^{6}+1$ in this subgraph.
Therefore, for each of the $n^{6}-1$ vertices $w\in C\setminus \{u,v\}$, the
edges $uw$ and $vw$ contribute exactly $1$ to the size of the bisection $%
\mathcal{B}$. Thus, since the edge $uv$ also contributes $1$ to the size of $%
\mathcal{B}$, it follows that the size of $\mathcal{B}$ is at least $n^{6}$,
which is a contradiction. Therefore for every gadget $Q_{k}(p)$ of $S_{i,j}$%
, where $k\in \{1,2,3\}$ and $p\in T_{i,j}$, all vertices of $Q_{k}(p)$
belong to the same color class of $\mathcal{B}$.

Now note that for all copies of the gadget $Q_{1}(p) $ in the set $S_{i,j}$,
their origins $p$ have the same $y$-coordinate (see Definition~\ref{Sij-def}%
). Similarly, for all copies of the gadgets $Q_{2}(p)$ and $Q_{3}(p)$ in $%
S_{i,j}$, their origins $p$ have the same $x$-coordinate. We order the
copies of $Q_{1}(p)$ in $S_{i,j}$ increasingly according to the $x$%
-coordinate of their origin $p$. Consider two consecutive copies of the
gadget $Q_{1}(p)$ in this ordering, with origins at points $p_{1}$ and $%
p_{2} $, respectively. Then, by the construction of the unit representation $%
R_{n}$ of $G_{n}$, the distance between $p_{1}$ and $p_{2}$ is equal to $%
d_{1}=5.6$. Therefore, it is easy to check that the vertices of the
horizontal ${(n^{3},p}_{1}{+4.7)}$-crowd of $Q_{1}(p_{1})$ and the vertices
of the horizontal ${(n^{3},p}_{2}{+0.9)}$-crowd of $Q_{1}(p_{2})$ induce a
clique of size $n^{3}+n^{3}=2n^{3}$ (cf.~Figure~\ref%
{intersection-tracks-disks-fig-1} and~\ref{intersection-tracks-disks-fig-3}%
). Thus, if the vertices of $Q_{1}(p_{1})$ belong to a different color class
than the vertices of $Q_{1}(p_{2})$, then $Q_{1}(p_{1})$ and $Q_{1}(p_{2})$
contribute $n^{6}$ to the size of the bisection $\mathcal{B}$, which is a
contradiction. Therefore all vertices of $Q_{1}(p_{1})$ and of $Q_{1}(p_{2})$
belong to the same color class. Furthermore, since this holds for any two
consecutive copies of the gadget $Q_{1}(p)$ in $S_{i,j}$, it follows that
the vertices of all copies of $Q_{1}(p)$ in $S_{i,j}$ belong to the same
color class.

Similarly, we order the copies of the gadgets $Q_{k}(p)$ in $S_{i,j}$, where 
$k\in \{2,3\}$, increasingly according to the $y$-coordinate of their origin 
$p$. Consider two consecutive copies of these gadgets in this ordering, with
origins at points $p_{1}$ and $p_{2}$, respectively. Note that, by the
construction of the unit representation $R_{n}$ of $G_{n}$, the point $p_{2}$
is either (i) the auxiliary point $r_{i,0}$ of track $T_{i,0}$ or (ii) the
intersection of the track $T_{i,j}$ with another track $T_{k,\ell }$, where $%
i<k$. Furthermore note that the gadget with origin at $p_{2}$ is $%
Q_{2}(p_{2})$, while the gadget with origin at $p_{1}$ is either $%
Q_{2}(p_{1})$ or $Q_{3}(p_{1})$. Suppose that the gadget with origin at $%
p_{1}$ is $Q_{2}(p_{1})$ (resp.~$Q_{3}(p_{1})$). Note that the distance
between $p_{1}$ and $p_{2}$ is equal to $d_{2}=7.2$. Therefore, it is easy
to check that the vertices of the vertical ${(n^{6},p}_{1}{+5.4)}$-crowd of $%
Q_{2}(p_{1})$ (resp.~of $Q_{3}(p_{1})$) and the single unit disk of $%
Q_{2}(p_{2})$, which is centered at point $p_{2}$, induce a clique of size $%
n^{6}+1$ (cf.~Figure~\ref{intersection-tracks-disks-fig-1} and~\ref%
{intersection-tracks-disks-fig-3}). Thus, if the vertices of $Q_{2}(p_{1})$
(resp.~of $Q_{3}(p_{1})$) belong to a different color class than the
vertices of $Q_{2}(p_{2})$, then $Q_{2}(p_{1})$ (resp.~$Q_{3}(p_{1})$) and $%
Q_{2}(p_{2})$ contribute $n^{6}$ to the size of the bisection $\mathcal{B}$,
which is a contradiction. Therefore all vertices of $Q_{2}(p_{1})$ (resp.~of 
$Q_{3}(p_{1})$) and of $Q_{2}(p_{2})$ belong to the same color class, and
thus the vertices of all copies of $Q_{k}(p)$ in $S_{i,j}$, where $k\in
\{2,3\}$, belong to the same color class.

It remains to prove that the vertices of the gadgets $Q_{1}(p)$ in $S_{i,j}$
belong to the same color class with the vertices of the gadgets $Q_{k}(p)$
in $S_{i,j}$, where $k\in \{2,3\}$. To this end, consider the rightmost
gadget $Q_{1}(p_{1})$ and the lowermost gadget $Q_{3}(p_{2})$ of $S_{i,j}$.
Note that, by the construction of the unit representation $R_{n}$ of $G_{n}$%
, the point $p_{2}$ is the bend point $p_{i,j}$ of the variable $x_{i}$ (cf.
Figure~\ref{intersection-tracks-disks-fig-2} and~\ref%
{intersection-tracks-disks-fig-4}). It is easy to check that the vertices of
the horizontal ${(n^{3},p}_{1}{+4.7)}$-crowd of $Q_{1}(p_{1})$ and the
vertical ${(n^{6},p}_{2}{+1.7)}$-crowd of $Q_{3}(p_{2})$ induce a clique of
size $n^{6}+n^{3}$ (cf.~Figure~\ref{intersection-tracks-disks-fig-1} and~\ref%
{intersection-tracks-disks-fig-3}). Thus, if the vertices of $Q_{1}(p_{1})$
belong to a different color class than the vertices of $Q_{3}(p_{2})$, then $%
Q_{1}(p_{1})$ and $Q_{3}(p_{2})$ contribute $n^{3}\cdot n^{6}=n^{9}$ to the
size of the bisection $\mathcal{B}$, which is a contradiction. Thus all
vertices of $Q_{1}(p_{1})$ and of $Q_{3}(p_{2})$ belong to the same color
class. Therefore, all vertices of $S_{i,j}$ belong to the same color class.%
\qed
\end{proof}

\section{Minimum bisection on unit disk graphs\label{H-phi-sec}}

In this section we provide our polynomial-time reduction from the monotone
Max-XOR($3$) problem to the minimum bisection problem on unit disk graphs.
To this end, given a monotone XOR($3$) formula $\phi $ with $n$ variables
and $m=\frac{3n}{2}$ clauses, we appropriately modify the auxiliary unit
disk graph $G_{n}$ of Section~\ref{Gn-sec} to obtain the unit disk graph $%
H_{\phi }$. Then we prove that the truth assignments that satisfy the
maximum number of clauses in $\phi $ correspond bijectively to the minimum
bisections in $H_{\phi }$.

We construct the unit disk graph $H_{\phi }=(V_{\phi },E_{\phi })$ from $%
G_{n}=(V_{n},E_{n})$ as follows. Let $(x_{i}\oplus x_{k})$ be a clause of $%
\phi $, where $i<k$. Let $p_{0}$ (resp.~$p_{1}$) be the unique point in the
unit disk representation $R_{n}$ that lies on the intersection of the tracks 
$T_{i,0}$ and $T_{k,1}$ (resp.~on the intersection of the tracks $T_{i,1}$
and $T_{k,0}$). For every point $p\in \{p_{0},p_{1}\}$, where we denote $%
p=(p_{x},p_{y})$, we modify the gadgets $Q_{1}(p)$ and $Q_{2}(p)$ in the
representation $R_{n}$ as follows:

\begin{itemize}
\item[(a)] replace the horizontal ${(n^{3},p+0.9)}$-crowd of $Q_{1}(p)$ by
the horizontal ${(n^{3}-1,p+0.9)}$-crowd and a single unit disk centered at $%
(p_{x}+0.9,p_{y}+0.02)$,

\item[(b)] replace the vertical ${(n^{6},p+1.8)}$-crowd of $Q_{2}(p)$ by the
vertical ${(n^{6}-1,p+1.8)}$-crowd and a single unit disk centered at $%
(p_{x}+0.02,p_{y}+1.8)$.
\end{itemize}

That is, for every point $p\in \{p_{0},p_{1}\}$, we first move one
(arbitrary) unit disk of the horizontal ${(n^{3},p+0.9)}$-crowd of $Q_{1}(p)$
upwards by $0.02$, and then we move one (arbitrary) unit disk of the
vertical ${(n^{6},p+1.8)}$-crowd of $Q_{2}(p)$ to the right by $0.02$. In
the resulting unit disk representation these two unit disks intersect,
whereas they do not intersect in the representation $R_{n}$. Furthermore it
is easy to check that for any other pair of unit disks, these disks
intersect in the resulting representation if and only if they intersect in $%
R_{n}$. The above modifications of $R_{n}$ for the clause $(x_{i}\oplus
x_{k})$ of $\phi $ are illustrated in Figure~\ref{Rn-modific-fig}.

\begin{figure}[tbh]
\centering
\subfigure[]{ \label{Rn-modific-fig-1}
\includegraphics[scale=0.6]{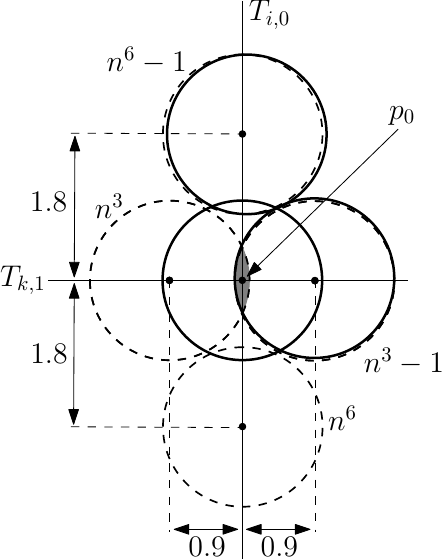}} \hspace{0.3cm} 
\subfigure[]{ \label{Rn-modific-fig-2}
\includegraphics[scale=0.6]{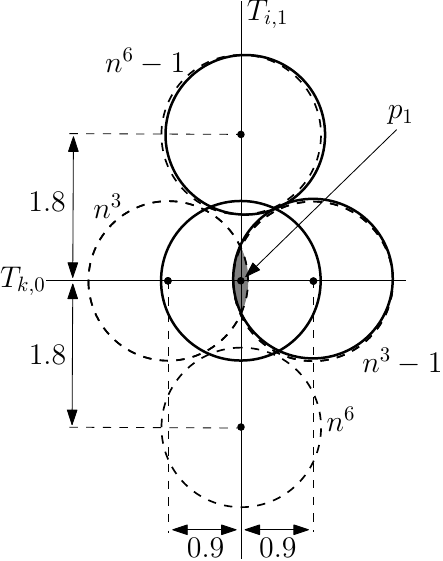}}
\caption{The modifications of the unit disk representation $R_{n}$ for the
clause $(x_{i}\oplus x_{k})$ of~$\protect\phi $, where (a)~$p_{0}$~is the
intersection of the tracks $T_{i,0}$ and $T_{k,1}$ and (b)~$p_{1}$~is the
intersection of the tracks $T_{i,1}$ and $T_{k,0}$. In both cases, one unit
disk of $Q_{1}(p)$ is moved upwards by~$0.02$ and one unit disk of $Q_{2}(p)$
is moved to the right by $0.02$, where $p\in \{p_{1},p_{2}\}$.}
\label{Rn-modific-fig}
\end{figure}

Denote by $R_{\phi }$ the unit disk representation that is obtained from $%
R_{n}$ by performing the above modifications for all clauses of the formula $%
\phi $. Then $H_{\phi }$ is the unit disk graph induced by $R_{\phi }$. Note
that, by construction, the graphs $H_{\phi }$ and $G_{n}$ have exactly the
same vertex set, i.e.~$V_{\phi }=V_{n}$, and that $E_{n}\subset E_{\phi }$.
In particular, note that the sets $S_{i,j}$ (cf.~Definition~\ref{Sij-def})
induce the same subgraphs in both $H_{\phi }$ and $G_{n}$, and thus the next
corollary follows directly by Lemma~\ref{Sij-same-color}.

\begin{corollary}
\label{Sij-same-color-H-phi-cor}Consider an arbitrary bisection $\mathcal{B}$
of $H_{\phi }$ with size strictly less than~$n^{6}$. Then for every set $%
S_{i,j}$, $i\in \{1,2,\ldots ,n\}$ and $j\in \{0,1\}$, all vertices of~$%
S_{i,j}$ belong to the same color class of $\mathcal{B}$.
\end{corollary}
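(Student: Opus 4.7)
The plan is to reduce the corollary directly to Lemma~\ref{Sij-same-color} by checking that the passage from $G_n$ to $H_\phi$ leaves the induced subgraph on each $S_{i,j}$ unchanged. Since the proof of Lemma~\ref{Sij-same-color} uses only edges lying inside a single $S_{i,j}$---cliques within one gadget and clique-sized overlaps between consecutive gadgets along the same track---the very same counting argument should then transfer verbatim to $H_\phi$ under the bisection-size hypothesis $|\mathcal{B}|<n^6$.

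First I would identify where the newly added edges go. For a clause $(x_i\oplus x_k)$ with $i<k$ and one of its associated intersection points $p\in T_{i,j}\cap T_{k,\ell}$, Definition~\ref{Sij-def} together with the note following it places the vertices of $Q_2(p)$ in $S_{i,j}$ (their centers lie on the vertical halfline of $T_{i,j}$) and the vertices of $Q_1(p)$ in $S_{k,\ell}$ (their centers lie on the horizontal halfline of $T_{k,\ell}$). The modifications (a) and (b) shift one disk from each of $Q_1(p)$ and $Q_2(p)$ by only $0.02$, which is far too small to reassign either disk to a different gadget or a different track; hence every vertex retains its original $S_{i,j}$-membership.

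Combining this with the remark already made in the text that the only newly intersecting pair of disks is precisely the pair of moved ones, every edge of $E_\phi\setminus E_n$ connects a vertex of $S_{i,j}$ to a vertex of $S_{k,\ell}$ with $i\neq k$. Therefore $H_\phi[S_{i,j}]=G_n[S_{i,j}]$ for every $(i,j)$, and every clique invoked in Lemma~\ref{Sij-same-color} (the size-$(n^6+1)$ cliques inside each gadget supplied by Observation~\ref{Qk(p)-obs}, the $2n^3$-cliques between consecutive copies of $Q_1$ along a horizontal halfline, the $(n^6+1)$-cliques between consecutive copies of $Q_2,Q_3$ along a vertical halfline, and the $(n^6+n^3)$-clique at the bend point) survives identically inside $H_\phi[S_{i,j}]$. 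Rerunning the same counting argument, any two-coloring that split some $S_{i,j}$ would split at least one of these cliques and contribute at least $n^6$ edges to $\mathcal{B}$, contradicting $|\mathcal{B}|<n^6$.

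The main (and essentially only) obstacle is the bookkeeping in the first step: confirming that the two shifted disks really do sit in different sets $S_{i,j}$ and $S_{k,\ell}$, so that no new edge is introduced \emph{within} a single $S_{i,j}$. Once that is settled the corollary follows by a verbatim replay of the proof of Lemma~\ref{Sij-same-color}.
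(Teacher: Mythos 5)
Your proposal is correct and takes essentially the same route as the paper: the paper's one-line justification is precisely that $V_\phi=V_n$, $E_n\subset E_\phi$, and the sets $S_{i,j}$ induce the same subgraphs in $H_\phi$ and $G_n$, so Lemma~\ref{Sij-same-color} transfers; you merely spell out the bookkeeping (the two shifted disks belong to $Q_2(p)\subseteq S_{i,j}$ and $Q_1(p)\subseteq S_{k,\ell}$ respectively, so the single new edge per modification is an inter-$S$ edge) that the paper leaves implicit. One remark: you could skip the induced-subgraph comparison entirely, since $\mathcal{B}$ is also a bisection of $G_n$ (identical vertex set) and $E_n\subset E_\phi$ forces its cut size in $G_n$ to be at most its cut size in $H_\phi$, which is $<n^6$, so Lemma~\ref{Sij-same-color} applies verbatim without rerunning its proof.
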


\begin{theorem}
\label{H-phi-reduction-thm}There exists a truth assignment $\tau $ of the
formula $\phi $ that satisfies at least $k$ clauses if and only if the unit
disk graph $H_{\phi }$ has a bisection with value at most $2n^{4}(n-1)+3n-2k$.
\end{theorem}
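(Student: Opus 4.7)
\medskip
\noindent\textbf{Plan.} I would prove both directions simultaneously by exhibiting a bijection between truth assignments XOR-satisfying $k$ clauses and \emph{split} bisections of $H_\phi$, i.e.\ bisections in which each $S_{i,j}$ is monochromatic and, for every variable $x_i$, the sets $S_{i,0}$ and $S_{i,1}$ receive opposite colors. A split bisection encodes the assignment $\tau(x_i)$ as the color of $S_{i,0}$; the main task is to show that its cut size equals $2n^4(n-1)+3n-2k$, and that no bisection of size at most this bound can fail to be split.

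\medskip
\noindent\textbf{Forward direction.} Given $\tau$ XOR-satisfying $k$ clauses, I would analyze the corresponding split bisection. By Lemma~\ref{number-vertices-track-lem} all $|S_{i,j}|$ coincide, so placing one of each pair $\{S_{i,0},S_{i,1}\}$ in each color class yields a balanced partition. I would then count cut edges between oppositely-colored sets $S_{i,j}$ and $S_{k,\ell}$ (with $i\ne k$) at the unique intersection point $p$ of $T_{i,j}$ and $T_{k,\ell}$: by inspecting Figures~\ref{intersection-tracks-disks-fig-1} and~\ref{intersection-tracks-disks-fig-3}, the only edges between these two sets are those joining the single disk centered at $p$ (which lies in the vertical-track gadget $Q_2(p)$) to the two neighboring horizontal $(n^3)$-crowds on the other track, giving exactly $2n^3$ cross-edges per oppositely-colored intersection. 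The borderline non-adjacency $\sqrt{0.9^2+1.8^2}\approx 2.012>2$ between the vertical $(n^6,p{+}1.8)$-crowd and the horizontal $(n^3)$-crowds is what keeps this count tight. Since for each unordered variable pair $\{i,k\}$ exactly $2$ of the $4$ choices of $(j,\ell)\in\{0,1\}^2$ are oppositely colored in a split bisection, the intersection pairs contribute in total $2n^3\cdot 2\binom{n}{2}=2n^4(n-1)$ cut edges. Finally, each clause $(x_i\oplus x_k)$ contributes at $p_0$ and $p_1$ one additional $0.02$-perturbation edge each; a direct distance check shows both of these are cut edges iff $\tau(x_i)=\tau(x_k)$, i.e.\ iff the clause is unsatisfied, yielding the additional $2(m-k)=3n-2k$ for the claimed cut size.

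\medskip
\noindent\textbf{Reverse direction.} Conversely, suppose $\mathcal{B}$ is a bisection of $H_\phi$ of size at most $2n^4(n-1)+3n-2k$; this is strictly less than $n^6$, so Corollary~\ref{Sij-same-color-H-phi-cor} makes each $S_{i,j}$ monochromatic. If $s$ denotes the number of variables with $S_{i,0}$ and $S_{i,1}$ oppositely colored, then the same intersection count gives a cut contribution of $2n^3(n^2-s)=2n^4(n-1)+2n^3(n-s)$. Because $2n^3>3n$ for $n\ge 2$, the hypothesis forces $s=n$, so $\mathcal{B}$ is split and defines a truth assignment $\tau$; the residual budget $\le 3n-2k$ is exactly the clause overhead $2(m-k')$ of the $k'$ clauses satisfied by $\tau$, whence $k'\ge k$.

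\medskip
\noindent\textbf{Main obstacle.} The main technical difficulty is the delicate geometric bookkeeping at each intersection point. Showing that the cross-edge count in $G_n$ is exactly $2n^3$ relies crucially on the borderline distance $\sqrt{0.9^2+1.8^2}\approx 2.012$ being just above~$2$, and showing that each $0.02$-perturbation in $H_\phi$ adds exactly one new adjacency (neither zero nor two) requires a similarly careful distance computation. Once this local analysis is in place, the cut decomposes cleanly into a coloring-dependent baseline plus per-clause overhead, and the rest of the argument is combinatorial.
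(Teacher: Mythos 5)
Your proposal is correct and follows the same overall reduction as the paper: construct the split bisection from $\tau$, apply Corollary~\ref{Sij-same-color-H-phi-cor} to force each $S_{i,j}$ monochromatic, and account separately for the $2n^3$ cross-edges at each oppositely-colored track intersection and the $\pm 0.02$-perturbation edges coming from clauses. Where you diverge is in the reverse direction's bookkeeping. The paper classifies each variable as blue, red, or balanced (with $t$ blue and $t$ red variables), splits the intersecting variable pairs into three types, introduces four counters $m_1,m_2,m_3,m_1^\ast$ for the clause contributions of each type, and arrives at the expression $2n^4(n-1)+4n^3t+2(m_1-m_1^\ast)+m_2+2m_3$; it then invokes the \emph{minimality} of $\mathcal{B}$ to deduce $t=0$. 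You instead observe directly that among the $n^2$ oppositely-colored pairs of sets $S_{i,j}$, $S_{k,\ell}$, exactly $s$ have $i=k$, so the baseline is $2n^3(n^2-s)=2n^4(n-1)+2n^3(n-s)$; since the clause perturbations can only add cut edges, the assumed upper bound $2n^4(n-1)+3n-2k$ forces $s=n$ without any appeal to minimality. These two computations agree (with $s=n-2t$, one has $2n^3(n^2-s)=2n^4(n-1)+4n^3t$), but yours is shorter, avoids the case split, and matches the hypothesis of the theorem more literally, since the statement assumes the existence of \emph{some} bisection of the stated value rather than a minimum one. The one place your sketch is terse is the identity ``the same intersection count gives $2n^3(n^2-s)$'': spelling out the counting of oppositely-colored set pairs (as above) would close that small gap.
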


\begin{proof}
($\Rightarrow $) Assume that the truth assignment $\tau $ of the variables $%
x_{1},x_{2},\ldots ,x_{n}$ of the formula $\phi $ satisfies at least $k$
clauses of $\phi $. We construct from the assignment $\tau $ a bisection $%
\mathcal{B}$ of the unit disk graph $H_{\phi }$ as follows. Denote the two
color classes of $\mathcal{B}$ by blue and red, respectively. For every
variable $x_{i}$, if $x_{i}=0$ in $\tau $ then we color all vertices of $%
H_{\phi }$ of the set $S_{i,0}$ blue and all vertices of the set $S_{i,1}$
red. Otherwise, if $x_{i}=1$ in $\tau $, we color all vertices of $H_{\phi }$
of the set $S_{i,0}$ red and all vertices of the set $S_{i,1}$ blue.
Therefore it follows by Lemma~\ref{number-vertices-track-lem} that for every
variable $x_{i}$, $1\leq i\leq n$, we have the same number of blue and red
vertices in $\mathcal{B}$, and thus $\mathcal{B}$ is indeed a bisection of
the graph $H_{\phi }$.

Recall that, in the formula $\phi $, every variable appears in exactly $3$
clauses, since $\phi $ is a monotone Max-XOR($3$) formula. Therefore $\phi $
has $m=\frac{3n}{2}$ clauses. Let now $1\leq i<k\leq n$. If $(x_{i}\oplus
x_{k})$ is not a clause of $\phi $ then, regardless of the value of $x_{i}$
and $x_{k}$ in the assignment $\tau $, the intersection of the tracks $%
T_{i,0},T_{i,1}$ with the tracks $T_{k,0},T_{k,1}$ contribute (due to the
construction of the graphs $G_{n}$ and $H_{\phi }$) exactly $%
2n^{3}+2n^{3}=4n^{3}$ edges to the value of the bisection $\mathcal{B}$.

If $(x_{i}\oplus x_{k})$ is one of the $k$ clauses of $\phi $ that are
satisfied by $\tau $, then the intersection of the tracks $T_{i,0},T_{i,1}$
with the tracks $T_{k,0},T_{k,1}$ contribute again $2n^{3}+2n^{3}=4n^{3}$
edges to the value of $\mathcal{B}$. Finally, if $(x_{i}\oplus x_{k})$ is
one of the $m-k$ clauses of $\phi $ that are not satisfied by $\tau $, then
the intersection of the tracks $T_{i,0},T_{i,1}$ with the tracks $%
T_{k,0},T_{k,1}$ contribute $\left( 2n^{3}+1\right) +\left( 2n^{3}+1\right)
=4n^{3}+2$ edges to the value of $\mathcal{B}$. Here the two additive
factors of ``+1'' are obtained due to the shifted and differently colored
disks in the construction. Summarizing, since $\tau$ satisfies at least $k$
clauses of $\phi $, the value of this bisection $\mathcal{B}$ of $H_{\phi}$
equals at most 
\begin{eqnarray*}
\left( {\binom{n}{2}}-m\right) 4n^{3}+(m-k)\left( 4n^{3}+2\right) +k\cdot
4n^{3} &=&2n^{4}(n-1)+2m-2k \\
&=&2n^{4}(n-1)+3n-2k
\end{eqnarray*}

($\Leftarrow $) Assume that $H_{\phi }$ has a minimum bisection $\mathcal{B}$
with value at most $2n^{4}(n-1)+3n-2k$. Denote the two color classes of $%
\mathcal{B}$ by blue and red, respectively. Since the size of $\mathcal{B}$
is strictly less than $n^{6}$, Corollary~\ref{Sij-same-color-H-phi-cor}
implies that for every $i\in \{1,2,\ldots ,n\}$ and $j\in \{0,1\}$, all
vertices of~the set $S_{i,j}$ belong to the same color class of $\mathcal{B}$%
. Therefore, all cut edges of $\mathcal{B}$ have one endpoint in a set $%
S_{i,j}$ and the other endpoint in a set $S_{k,\ell }$, where $(i,j)\neq
(k,\ell )$. Furthermore, since $\mathcal{B}$ is a bisection of $H_{\phi }$,
Lemma~\ref{number-vertices-track-lem} implies that exactly $n$ of the sets $%
\{S_{i,j}:1\leq i\leq n,j\in \{0,1\}\}$ are colored blue and the other $n$
ones are colored red in $\mathcal{B}$.

First we will prove that, for every $i\in \{1,2,\ldots ,n\}$, the sets $%
S_{i,0}$ and $S_{i,1}$ belong to different color classes in $\mathcal{B}$.
To this end, let $t\geq 0$ be the number of variables $x_{i}$, $1\leq i\leq
n $, for which both sets $S_{i,0}$ and $S_{i,1}$ are colored blue (such
variables $x_{i}$ are called \emph{blue}). Then, since $\mathcal{B}$ is a
bisection of $H_{\phi }$, there must be also $t$ variables $x_{i}$, $1\leq
i\leq n$, for which both sets $S_{i,0}$ and $S_{i,1}$ are colored red (such
variables $x_{i}$ are called \emph{red}), whereas $n-2t$ variables $x_{i}$,
for which one of the sets $\{S_{i,0},S_{i,1}\}$ is colored blue and the
other one red (such variables $x_{i}$ are called \emph{balanced}). Using the
minimality of the bisection $\mathcal{B}$, we will prove that $t=0$.

Every cut edge of $\mathcal{B}$ occurs at the intersection of the tracks of
two variables $x_{i},x_{k}$, where either both $x_{i},x_{k}$ are balanced
variables, or one of them is a balanced and the other one is a blue or red
variable, or one of them is a blue and the other one is a red variable.
Furthermore recall by the construction of the graph $H_{\phi }$ from the
graph $G_{n}$ that every clause $(x_{i}\oplus x_{k})$ of the formula $\phi $
corresponds to an intersection of the tracks of the variables $x_{i}$ and $%
x_{k}$. Among the $m$ clauses of $\phi $, let $m_{1}$ of them correspond to
intersections of tracks of two balanced variables, $m_{2}$ of them
correspond to intersections of tracks of a balanced variable and a blue or
red variable, and $m_{3}$ of them correspond to intersections of tracks of a
blue variable and a red variable. Note that $m_{1}+m_{2}+m_{3}\leq m$.

Let $1\leq i<k\leq n$. In the following we distinguish the three cases of
the variables $x_{i},x_{k}$ that can cause a cut edge in the bisection $%
\mathcal{B}$.

\begin{itemize}
\item $x_{i}$\textbf{\ and }$x_{k}$\textbf{\ are both balanced variables:}
in total there are $\frac{(n-2t)(n-2t-1)}{2}$ such pairs of variables, where
exactly $m_{1}$ of them correspond to a clause $(x_{i}\oplus x_{k})$ of the
formula $\phi $. It is easy to check that, for every such pair $x_{i},x_{k}$
that does not correspond to a clause of $\phi $, the intersection of the
tracks of $x_{i}$ and $x_{k}$ contributes exactly $2n^{3}+2n^{3}=4n^{3}$
edges to the value of $\mathcal{B}$. Furthermore, for each of the $m_{1}$
other pairs $x_{i},x_{k}$ that correspond to a clause of $\phi $, the
intersection of the tracks of $x_{i}$ and $x_{k}$ contributes either $4n^{3}$
or $4n^{3}+2$ edges to the value of $\mathcal{B}$. In particular, if the
vertices of the sets $S_{i,0}$ and $S_{k,1}$ have the same color in $%
\mathcal{B}$ then the pair $x_{i},x_{k}$ contributes $4n^{3}$ edges to the
value of $\mathcal{B}$, otherwise it contributes $4n^{3}+2$ edges. Among
these $m_{1}$ clauses, let $m_{1}^{\ast }$ of them contribute $4n^{3}$ edges
each and the remaining $m_{1}-m_{1}^{\ast }$ of them contribute $4n^{3}+2$
edges each.

\item \textbf{one of }$x_{i},x_{k}$\textbf{\ is a balanced variable and the
other one is a blue or red variable:} in total there are $(n-2t)2t$ such
pairs of variables, where exactly $m_{2}$ of them correspond to a clause $%
(x_{i}\oplus x_{k})$ of the formula $\phi $. It is easy to check that, for
every such pair $x_{i},x_{k}$ that does not correspond to a clause of $\phi $%
, the intersection of the tracks of $x_{i}$ and $x_{k}$ contributes exactly $%
2n^{3}+2n^{3}=4n^{3}$ edges to the value of $\mathcal{B}$. Furthermore, for
each of the $m_{2}$ other pairs $x_{i},x_{k}$ that correspond to a clause of 
$\phi $, the intersection of the tracks of $x_{i}$ and $x_{k}$ contributes $%
4n^{3}+1$ edges to the value of $\mathcal{B}$.

\item \textbf{one of }$x_{i},x_{k}$\textbf{\ is a blue variable and the
other one is a red variable:} in total there are $t^{2}$ such pairs of
variables, where exactly $m_{3}$ of them correspond to a clause $%
(x_{i}\oplus x_{k})$ of the formula $\phi $. It is easy to check that, for
every such pair $x_{i},x_{k}$ that does not correspond to a clause of $\phi $%
, the intersection of the tracks of $x_{i}$ and $x_{k}$ contributes exactly $%
4\cdot 2n^{3}=8n^{3}$ edges to the value of $\mathcal{B}$. Furthermore, for
each of the $m_{3}$ other pairs $x_{i},x_{k}$ that correspond to a clause of 
$\phi $, the intersection of the tracks of $x_{i}$ and $x_{k}$ contributes $%
8n^{3}+2$ edges to the value of $\mathcal{B}$.
\end{itemize}

Therefore, the value of $\mathcal{B}$ can be computed as follows:%
\begin{eqnarray}
&&\left( \frac{(n-2t)(n-2t-1)}{2}-m_{1}\right) 4n^{3}+m_{1}^{\ast
}4n^{3}+(m_{1}-m_{1}^{\ast })(4n^{3}+2)  \notag \\
&&+\left( (n-2t)2t-m_{2}\right) 4n^{3}+m_{2}\left( 4n^{3}+1\right)  \notag \\
&&+\left( t^{2}-m_{3}\right) 8n^{3}+m_{3}\left( 8n^{3}+2\right)  \notag \\
&=&(n-2t)(n-2t-1)2n^{3}+2(m_{1}-m_{1}^{\ast })  \label{value-B-eq} \\
&&+(n-2t)2t4n^{3}+m_{2}  \notag \\
&&+t^{2}\cdot 8n^{3}+2m_{3}  \notag \\
&=&(n-2t)(n+2t)2n^{3}-(n-2t)2n^{3}+t^{2}8n^{3}+2(m_{1}-m_{1}^{\ast
})+m_{2}+2m_{3}  \notag \\
&=&2n^{4}\left( n-1\right) +4n^{3}t+2(m_{1}-m_{1}^{\ast })+m_{2}+2m_{3} 
\notag
\end{eqnarray}

Note now that $0\leq 2(m_{1}-m_{1}^{\ast })+m_{2}+2m_{3}\leq 2m=3n<4n^{3}$.
Therefore, since the value of the bisection $\mathcal{B}$ (given in (\ref%
{value-B-eq})) is minimum by assumption, it follows that $t=0$. Thus for
every $i\in \{1,2,\ldots ,n\}$ the variable $x_{i}$ of $\phi $ is balanced
in the bisection $\mathcal{B}$, i.e.~the sets $S_{i,0}$ and $S_{i,1}$ belong
to different color classes in $\mathcal{B}$. That is, $m_{1}=m$ and $%
m_{2}=m_{3}=0$, and thus the value of $\mathcal{B}$ is by (\ref{value-B-eq})
equal to $2n^{4}\left( n-1\right) +2(m-m_{1}^{\ast })$. On the other hand,
since the value of $\mathcal{B}$ is at most $2n^{4}(n-1)+3n-2k$ by
assumption, it follows that $2(m-m_{1}^{\ast })\leq 3n-2k$. Therefore, since 
$m=\frac{3n}{2}$, it follows that $m_{1}^{\ast }\geq k$.

We define now from $\mathcal{B}$ the truth assignment $\tau $ of $\phi $ as
follows. For every $i\in \{1,2,\ldots ,n\}$, if the vertices of the set $%
S_{i,0}$ are blue and the vertices of the set $S_{i,1}$ are red in $\mathcal{%
B}$, then we set $x_{i}=0$ in $\tau $. Otherwise, if the vertices of the set 
$S_{i,0}$ are red and the vertices of the set $S_{i,1}$ are blue in $%
\mathcal{B}$, then we set $x_{i}=1$ in $\tau $. Recall that $m_{1}^{\ast }$
is the number of clauses of $\phi $ that contribute $4n^{3}$ edges each to
the value of $\mathcal{B}$, while the remaining $m-m_{1}^{\ast }$ clauses of 
$\phi $ contribute $4n^{3}+2$ edges each to the value of $\mathcal{B}$.
Thus, by the construction of $H_{\phi }$ from $G_{n}$, for every clause $%
(x_{i}\oplus x_{k})$ of $\phi $ that contributes $4n^{3}$ (resp.~$4n^{3}+2$)
to the value of $\mathcal{B}$, the vertices of the sets $S_{i,0}$ and $%
S_{k,1}$ have the same color (resp.~$S_{i,0}$ and $S_{k,1}$ have different
colors) in $\mathcal{B}$. Therefore, by definition of the truth assignment $%
\tau $, there are exactly $m_{1}^{\ast }$ clauses $(x_{i}\oplus x_{k})$ of $%
\phi $ where $x_{i}\neq x_{k}$ in $\tau $, and there are exactly $%
m-m_{1}^{\ast }$ clauses $(x_{i}\oplus x_{k})$ of $\phi $ where $x_{i}=x_{k}$
in $\tau $. That is, $\tau $ satisfies exactly $m_{1}^{\ast }\geq k$ of the $%
m$ clauses of $\phi $. This completes the proof of the theorem.\qed
\end{proof}

We can now state our main result, which follows by Theorem~\ref%
{H-phi-reduction-thm} and Lemma~\ref{xor-np-hard-lem}.

\begin{theorem}
\label{min-bisection-unit-disk-np-hard-thm}\textsc{Min-Bisection} is NP-hard
on unit disk graphs.
\end{theorem}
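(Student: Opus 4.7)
The plan is to obtain Theorem~\ref{min-bisection-unit-disk-np-hard-thm} as an immediate consequence of the two results already established, namely Lemma~\ref{xor-np-hard-lem} (monotone Max-XOR($3$) is NP-hard) and Theorem~\ref{H-phi-reduction-thm} (which provides a gap-preserving bidirectional correspondence between truth assignments of a monotone XOR($3$) formula $\phi$ and bisections of the unit disk graph $H_\phi$). The strategy is therefore a standard many-one reduction argument: take an instance of monotone Max-XOR($3$) and reduce it in polynomial time to an instance of \textsc{Min-Bisection} on unit disk graphs such that the decision version on the target instance answers the decision version on the source instance.

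First I would recall the decision variant: given a monotone XOR($3$) formula $\phi$ with $n$ variables and an integer $k$, decide whether some truth assignment XOR-satisfies at least $k$ clauses of $\phi$; this is NP-hard by Lemma~\ref{xor-np-hard-lem}. Given such an instance $(\phi,k)$, I would construct the unit disk graph $H_\phi$ exactly as in Section~\ref{H-phi-sec}, together with its explicit unit disk representation $R_\phi$, and ask whether $H_\phi$ admits a bisection of value at most $2n^4(n-1)+3n-2k$. By Theorem~\ref{H-phi-reduction-thm}, the answer to this \textsc{Min-Bisection} decision question coincides precisely with the answer for $(\phi,k)$, which gives the desired Karp reduction.

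Next I would verify that the reduction is genuinely polynomial-time. The auxiliary graph $G_n$ is built from $O(n^2)$ copies of the gadgets $Q_1,Q_2,Q_3$, each consisting of $2n^6+2$ unit disks, so $G_n$ has $O(n^8)$ vertices; turning $G_n$ into $H_\phi$ only perturbs a constant number of disk centers per clause of $\phi$, and there are $O(n)$ clauses, so $H_\phi$ and its unit disk representation $R_\phi$ are constructed in time polynomial in $n$. The target bound $2n^4(n-1)+3n-2k$ is also computable in polynomial time. Combined with the NP-hardness given by Lemma~\ref{xor-np-hard-lem} and the equivalence supplied by Theorem~\ref{H-phi-reduction-thm}, this establishes that \textsc{Min-Bisection} on unit disk graphs is NP-hard.

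There is essentially no obstacle in this final step, as all the technical work, correctness of the gadget construction, the ``same color class'' property (Lemma~\ref{Sij-same-color} and Corollary~\ref{Sij-same-color-H-phi-cor}), and the precise counting of cut edges, has already been carried out in the preceding sections; the only thing to check is that the reduction is polynomially bounded, which is immediate from the $O(n^8)$ size of $H_\phi$.
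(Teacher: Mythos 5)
Your proposal is correct and follows exactly the same route as the paper, which simply observes that Theorem~\ref{min-bisection-unit-disk-np-hard-thm} follows directly from Lemma~\ref{xor-np-hard-lem} and Theorem~\ref{H-phi-reduction-thm}. You merely spell out the standard Karp-reduction bookkeeping (decision variant, polynomial size of $H_\phi$, computability of the threshold), all of which is implicit in the paper's one-line justification.
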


\section{Concluding Remarks\label{conclusions}}

In this paper we proved that \textsc{Min-Bisection} is NP-hard on unit disk
graphs by providing a polynomial time reduction from the monotone Max-XOR($3$%
) problem, thus solving a longstanding open question. As pointed out in the
Introduction, our results indicate that \textsc{Min-Bisection} is probably
also NP-hard on planar graphs, or equivalently on grid graphs with an
arbitrary number of holes, which remains yet to be proved. Our construction
for the NP-hardness reduction involved huge cliques, and thus it seems that
a different approach would be needed to possibly prove NP-hardness of 
\textsc{Min-Bisection} for planar graphs.

\end{document}